\newcommand{\eps}{\varepsilon}
\newcommand{\preq}{\preccurlyeq}
\DeclareMathOperator{\alp}{\rm alph}
\begin{document}
\frontmatter
\pagestyle{headings}

\mainmatter
\title{On Upper and Lower Bounds on the Length\\ of Alternating Towers}
\authorrunning{\v{S}. Holub, G. Jir\'{a}skov\'{a}, T. Masopust}
\author{
  \v St\v ep\'an Holub\,\inst{1,}%
  \thanks{Research supported by the Czech Science Foundation grant number 13-01832S.}
  \and
  Galina Jir\'{a}skov\'{a}\,\inst{2,}%
  \thanks{Research supported by grant APVV-0035-10.}
  \and
  Tom\'{a}\v{s} Masopust\,\inst{3,}%
  \thanks{Research supported by RVO~67985840 and by the DFG in grant KR~4381/1-1.}%
  }
\institute{
  Dept. of Algebra, Charles University, Sokolovsk\'a 83, 175 86 Praha, Czech Republic\\
  \email{holub@karlin.mff.cuni.cz}
  \and
  Mathematical Institute, Slovak Academy of Sciences\\
  Gre{\v s}{\' a}kova 6, 040 01 Ko\v{s}ice, Slovak Republic\\
  \email{jiraskov@saske.sk}
  \and
  Institute of Mathematics, ASCR,
  {\v Z}i{\v z}kova 22, 616 62 Brno, Czech Republic, and TU Dresden, Germany\\
  \email{masopust@math.cas.cz}\\
}

\maketitle
\begin{abstract}
  A tower between two regular languages is a sequence of strings such that all strings on odd positions belong to one of the languages, all strings on even positions belong to the other language, and each string can be embedded into the next string in the sequence. It is known that if there are towers of any length, then there also exists an infinite tower. We investigate upper and lower bounds on the length of finite towers between two regular languages with respect to the size of the automata representing the languages in the case there is no infinite tower. This problem is relevant to the separation problem of regular languages by piecewise testable languages.
\end{abstract}

\section{Introduction}
  The separation problem appears in many disciplines of mathematics and computer science, such as algebra and logic~\cite{mfcsPlaceRZ13,PlaceZ14}, or databases and query answering~\cite{icalp2013}. Given two languages $K$ and $L$ and a family of languages $\mathcal{F}$, the problem asks whether there exists a language $S$ in $\mathcal{F}$ such that $S$ includes one of the languages $K$ and $L$, and it is disjoint with the other. Recently, it has been independently shown in~\cite{icalp2013} and~\cite{mfcsPlaceRZ13} that the separation problem for two regular languages given as NFAs and the family of piecewise testable languages is decidable in polynomial time with respect to both the number of states and the size of the alphabet. It should be noted that an algorithm polynomial with respect to the number of states and exponential with respect to the size of the alphabet has been known in the literature~\cite{Almeida-jpaa90,AlmeidaZ-ita97}. In~\cite{icalp2013}, the separation problem has been shown to be equivalent to the non-existence of an infinite tower between the input languages. Namely, the languages have been shown separable by a piecewise testable language if and only if there does not exist an infinite tower. In~\cite{mfcsPlaceRZ13}, another technique has been used to prove the polynomial time bound for the decision procedure, and a doubly exponential upper bound on the index of the separating piecewise testable language has been given. This information can then be further used to construct a separating piecewise testable language.
  
  However, there exists a simple (in the meaning of description, not complexity) method to decide the separation problem and to compute the separating piecewise testable language, whose running time depends on the length of the longest finite tower. The method is recalled in Section~\ref{secAlg}. This observation has motivated the study of this paper to investigate the upper bound on the length of finite towers in the presence of no infinite tower. So far, to the best of our knowledge, the only published result in this direction is a paper by Stern~\cite{Stern-tcs85}, who has given an exponential upper bound $2^{{|\Sigma|}^2 N}$ on the length of the tower between a piecewise testable language and its complement, where $N$ is the number of states of the minimal deterministic automaton. 
  
  Our contribution in this paper are upper and lower bounds on the length of maximal finite towers between two regular languages in the case no infinite towers exist. These bounds depend on the size of the input (nondeterministic) automata. The upper bound is exponential with respect to the size of the input alphabet. More precisely, it is polynomial with respect to the number of states with the cardinality of the input alphabet in the exponent (Theorem~\ref{thm03}). Concerning the lower bounds, we show that the bound is tight for binary languages up to a linear factor (Theorem~\ref{thm:quadratic}), that a cubic tower with respect to the number of states exists (Theorem~\ref{thm:cubic}), and that an exponential lower bound with respect to the size of the input alphabet can be achieved (Theorem~\ref{thm:exp}).

\section{Preliminaries}
  We assume that the reader is familiar with automata and formal language theory. The cardinality of a set $A$ is denoted by $|A|$ and the power set of $A$ by $2^A$. An alphabet $\Sigma$ is a finite nonempty set. The free monoid generated by $\Sigma$ is denoted by $\Sigma^*$. A string over $\Sigma$ is any element of $\Sigma^*$; the empty string is denoted by $\eps$. For a string $w\in\Sigma^*$, $\alp(w)\subseteq\Sigma$ denotes the set of all letters occurring in $w$.

  We define {\em (alternating subsequence) towers\/} as a generalization of Stern's alternating towers~\cite{Stern-tcs85}. For strings $v = a_1 a_2 \cdots a_n$ and $w \in \Sigma^* a_1 \Sigma^* \cdots \Sigma^* a_n \Sigma^*$, we say that $v$ is a {\em subsequence\/} of $w$ or that $v$ can be {\em embedded\/} into $w$, denoted by $v \preccurlyeq w$. For languages $K$ and $L$ and the subsequence relation $\preccurlyeq$, we say that a sequence $(w_i)_{i=1}^{k}$ of strings is an {\em (alternating subsequence) tower between $K$ and $L$\/} if $w_1 \in K \cup L$ and, for all $i = 1, \ldots, k-1$,
  \begin{itemize}
    \item $w_i \preccurlyeq w_{i+1}$,
    \item $w_i \in K$ implies $w_{i+1} \in L$, and
    \item $w_i \in L$ implies $w_{i+1} \in K$.
  \end{itemize}
  
  We say that $k$ is the {\em length\/} of the tower. Similarly, we define an infinite sequence of strings to be an {\em infinite (alternating subsequence) tower between $K$ and $L$}. If the languages are clear from the context, we omit them. Notice that the languages are not required to be disjoint, however, if there exists a $w \in K \cap L$, then there exists an infinite tower, namely $w, w, w, \ldots$.
  
  For two languages $K$ and $L$, we say that the {\em language $K$ can be embedded into the language $L$}, denoted $K\preccurlyeq L$, if for each string $w$ in $K$, there exists a string $w'$ in $L$ such that $w\preccurlyeq w'$. We say that a {\em string $w$ can be embedded into the language $L$}, denoted $w\preccurlyeq L$, if $\{w\}\preccurlyeq L$.

  A {\em nondeterministic finite automaton\/} (NFA) is a 5-tuple $M = (Q,\Sigma,\delta,Q_0,F)$, where $Q$ is the finite nonempty set of states, $\Sigma$ is the input alphabet, $Q_0\subseteq Q$ is the set of initial states, $F\subseteq Q$ is the set of accepting states, and $\delta:Q\times\Sigma\to 2^Q$ is the transition function that can be extended to the domain $2^Q\times\Sigma^*$. The language {\em accepted\/} by $M$ is the set $L(M) = \{w\in\Sigma^* \mid \delta(Q_0, w) \cap F \neq\emptyset\}$. A {\em path\/} $\pi$ is a sequence of states and input symbols $q_0, a_0, q_1, a_1, \ldots, q_{n-1}, a_{n-1}, q_n$, for some $n\ge 0$, such that $q_{i+1} \in \delta(q_i,a_i)$, for all $i=0,1,\ldots,n-1$. The path $\pi$ is {\em accepting\/} if $q_0\in Q_0$ and $q_n\in F$. We also use the notation $q_0 \xrightarrow{a_1a_2\cdots a_{n-1}} q_{n}$ to denote a path from $q_0$ to $q_n$ under a string $a_1a_2\cdots a_{n-1}$.

  The NFA $M$ has a {\em cycle over an alphabet $\Gamma\subseteq\Sigma$\/} if there exists a state $q$ and a string $w$ over $\Sigma$ such that $\alp(w)=\Gamma$ and $q\xrightarrow{w} q$.
  
  We assume that there are no useless states in the automata under consideration, that is, every state appears on an accepting path.

\section{Computing a Piecewise Testable Separator\,\protect\footnote{The method recalled here is not the original work of this paper and the credit for this should go to the authors of~\cite{pc2013}, namely to Wim Martens and Wojciech Czerwi\'nski.}
}\label{secAlg}
  We now motivate our study by recalling a ``simple'' method~\cite{pc2013} solving the separation problem of regular languages by piecewise testable languages and computing a piecewise testable separator, if it exists. Our motivation to study the length of towers comes from the fact that the running time of this method depends on the maximal length of finite towers.
  
  Let $K$ and $L$ be two languages. A language $S$ \emph{separates $K$ from $L$\/} if $S$ contains $K$ and does not intersect $L$. Languages $K$ and $L$ are \emph{separable by a family $\mathcal{F}$\/} if there exists a language $S$ in $\mathcal{F}$ that separates $K$ from $L$ or $L$ from $K$.

  A regular language is {\em piecewise testable\/} if it is a finite boolean combination of languages of the form $\Sigma^* a_1 \Sigma^* a_2 \Sigma^* \cdots \Sigma^* a_k \Sigma^*$, where $k\ge 0$ and $a_i\in \Sigma$, see~\cite{Simon1972,Simon1975} for more details.

  Given two disjoint regular languages $L_0$ and $R_0$ represented as NFAs. We construct a decreasing sequence of languages $\ldots\preccurlyeq R_2 \preccurlyeq L_2 \preccurlyeq R_1 \preccurlyeq L_1 \preccurlyeq R_0$ as follows, show that a separator exists if and only if from some point on all the languages are empty, and use them to construct a piecewise testable separator.
  
  For $k\ge 1$, let $L_k=\{ w \in L_{k-1} \mid w \preq R_{k-1}\}$ be the set of all strings of $L_{k-1}$ that can be embedded into $R_{k-1}$, and let $R_k=\{ w\in R_{k-1} \mid w\preq L_k\}$, see Fig.~\ref{fig2}.
  \begin{figure}
    \centering
    \begin{tikzpicture}
      \draw (1,2) node {$L_0$};
      \draw (5,2) node {$R_0$};
      \draw (1,1.5) node {$w_1\in L_1$};
      \draw (5,1) node {$R_1$};
      \draw (1,0.5) node {$w_2\in L_2$};
      \draw (5,0) node {$R_2$};
      \draw (3,0) node {$\vdots$};
      \draw[->] (1.6,1.6) -- (4.8,2);
      \draw[->] (4.8,1.1) -- (1.6,1.5);
      \draw[->] (1.6,.6) -- (4.8,1);
      \draw[->] (4.8,.05) -- (1.6,.5);
    \end{tikzpicture}
    \caption{The sequence of languages; an arrow stands for the embedding relation $\preccurlyeq$.}
    \label{fig2}
  \end{figure}
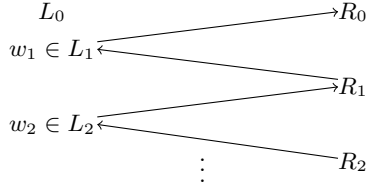
  Let $K$ be a language accepted by an NFA $A=(Q,\Sigma,\delta,Q_0,F)$, and let $\eps(K)$ denote the language accepted by the NFA $A_\eps=(Q,\Sigma,\delta_\eps,Q_0,F)$, where $\delta_\eps(q,a)=\delta(q,a)$ and $\delta_\eps(q,\eps)=\bigcup_{a\in\Sigma} \delta(q,a)$. Then $L_k=L_{k-1}\cap \eps(R_{k-1})$ (analogously for $R_k$), hence the languages are regular.
  
  We now show that there exists a constant $B\ge 1$ such that $L_{B}=L_{B+1}=\ldots$, which also implies $R_B=R_{B+1}=\ldots$. Assume that no such constant exists. Then there are infinitely many strings $w_{\ell}\in L_{\ell}\setminus L_{\ell+1}$, for all $\ell\ge 1$, as depicted in Fig.~\ref{fig2}. By Higman's lemma~\cite{higman}, there exist $i < j$ such that $w_i\preccurlyeq w_j$, hence $w_i \preq R_{j-1}$, which is a contradiction because $w_i \not\preq R_{i}$ and $ R_{j-1}\subseteq R_i$. 

  By construction, languages $L_B$ and $R_B$ are mutually embeddable into each other, $L_B\preq R_B \preq L_B$, which describes a way how to construct an infinite tower. Thus, if there is no infinite tower, languages $L_B$ and $R_B$ must be empty. 

  The constant $B$ depends on the length of the longest finite tower. Let $(w_i)_{i=1}^{r}$ be a maximal finite tower between $L_0$ and $R_0$ and assume that $w_r$ belongs to $L_0$. In the first step, the method eliminates all strings that cannot be embedded into $R_0$, hence $w_r$ does not belong to $L_1$, but $(w_i)_{i=1}^{r-1}$ is a tower between $L_1$ and $R_0$. Thus, in each step of the algorithm, all maximal strings of all finite towers (belonging to the language under consideration) are eliminated, while the rests of towers still form towers between the resulting languages. Therefore, as long as there is a maximal finite tower, the algorithm can make another step.
  
  Assume that there is no infinite tower ($L_B=R_B=\emptyset$). We use the languages computed above to construct a piecewise testable separator. For a string $w=a_1a_2\cdots a_\ell$, we define $L_w = \Sigma^* a_1 \Sigma^*a_2\Sigma^* \cdots \Sigma^* a_\ell \Sigma^*$, which is piecewise testable by definition. Let $up(L) = \bigcup_{w\in L} L_w$. The language $up(L)$ is regular and its NFA is constructed from an NFA for $L$ by adding self-loops under all letters to all states, see~\cite{pkps2014} for more details. By Higman's Lemma~\cite{higman}, $up(L)$ can be written as a finite union of languages of the form $L_w$, for some $w\in L$, hence it is piecewise testable. For $k=B,B-1,\ldots,1$, we define the piecewise testable languages $ S_k = up(R_0\setminus R_k) \setminus up(L_0\setminus L_k) $ and show that $S = \bigcup_{k=1}^{B} S_k$ is a piecewise testable separator of $L_0$ and $R_0$. 
  
  To this end, we show that $L_0\cap S_k=\emptyset$ and $R_0\subseteq S$. To prove the former, let $w\in L_0$. If $w\in L_0 \setminus L_k$, then $w\in up(L_0 \setminus L_k)$, hence $w\notin S_k$. If $w\in L_k$ and $w\in up(R_0\setminus R_k)$, then there is $v\in R_0\setminus R_k$ such that $v\preccurlyeq w$. However, $R_k=\{ u\in R_0 \mid u\preq L_k\}$, hence $v\in R_k$, a contradiction. Thus $L_0\cap S_k=\emptyset$. To prove the later, we show that $R_{k-1}\setminus R_k\subseteq S_k$. Then $R_0= \bigcup_{k=1}^{B} (R_{k-1}\setminus R_k) \subseteq S$. To show this, we have $R_{k-1}\setminus R_k\subseteq R_0\setminus R_k \subseteq up(R_0\setminus R_k)$. If $w\in R_{k-1}$ and $w\in up(L_0\setminus L_k)$, then there is $v\in L_0\setminus L_k$ such that $v\preccurlyeq w$. However, $L_k=\{u \in L_0 \mid u\preq R_{k-1}\}$, hence $v\in L_k$, a contradiction. Thus, we have shown that $L_0\cap S=\emptyset$ and $R_0\subseteq S$. Moreover, $S$ is piecewise testable because it is a finite boolean combination of piecewise testable languages.

\section{The Length of Towers}
  Recall that it was shown in~\cite{icalp2013} that there is either an infinite tower or a constant bound on the length of any tower. We now establish an upper bound on the length of finite towers.

  \begin{theorem}\label{thm03}
    Let $A_0$ and $A_1$ be NFAs with at most $n$ states over an alphabet $\Sigma$ of cardinality $m$, and assume that there is no infinite tower between the languages $L(A_0)$ and $L(A_1)$. Let $(w_i)_{i=1}^r$ be a tower between $L(A_0)$ and $L(A_1)$ such that $w_i\in L(A_{i\bmod 2})$. Then $r \le \frac{n^{m+1}-1}{n-1}$.
  \end{theorem}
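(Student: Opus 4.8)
The plan is to bound, for each possible alphabet, how much of the tower can be made up of strings having that alphabet, and then to sum a geometric series. First I would record two easy consequences of the hypothesis that no infinite tower exists. If $w_i=w_{i+1}$ for some $i$, then this common string lies in $L(A_0)\cap L(A_1)$ and $w_i,w_i,w_i,\dots$ is an infinite tower; hence all consecutive strings are distinct, and since $v\preq v'$ together with $v\neq v'$ forces $|v|<|v'|$, the lengths strictly increase, $|w_1|<\cdots<|w_r|$. Also $w_i\preq w_{i+1}$ gives $\alp(w_i)\subseteq\alp(w_{i+1})$, so the alphabets $\alp(w_1)\subseteq\cdots\subseteq\alp(w_r)$ form a weakly increasing chain in $2^\Sigma$ and take at most $m+1$ distinct values $\Delta_0\subsetneq\Delta_1\subsetneq\cdots\subsetneq\Delta_p$ with $|\Delta_0|<\cdots<|\Delta_p|\le m$. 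This cuts the tower into $p+1$ consecutive blocks, the $t$-th being the maximal run of strings whose alphabet is exactly $\Delta_t$.

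The core of the argument is a bound on the length of a single block: I claim that a block on which every string has alphabet $\Gamma$ has length at most $n^{|\Gamma|}$ (for $\Gamma=\emptyset$ this just says ``at most $1$'', already covered by the first step). To prove this I would use a pigeonhole/signature argument. To each string $v$ of the block attach a \emph{signature}: fix an accepting run of $v$ in the automaton that accepts it, enumerate $\Gamma=\{a_1,\dots,a_k\}$ in order of first occurrence in $v$, and record the $k$-tuple of states that this run occupies immediately after reading the first occurrence of $a_1,\dots,a_k$. Since each coordinate has at most $n$ values, there are at most $n^{k}$ such tuples, so a block that is longer than $n^{k}$ contains two strings $v\preq v'$ sharing a signature. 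From such a coincidence I would build an infinite tower: because the milestone states of $v'$ agree with those of $v$, the portions of the run of $v'$ lying strictly between consecutive milestones can be reinserted into an accepting run any number of times without changing the set of letters read, producing an infinite $\preq$-increasing sequence of words of alphabet $\Gamma$ inside one of the two languages; interleaving it with — and pumping in the same manner — the word of the other language occurring between $v$ and $v'$ in the block yields an infinite alternating tower, a contradiction.

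Given the block bound, $r\le\sum_{t=0}^{p}n^{|\Delta_t|}$, and since the $|\Delta_t|$ are distinct elements of $\{0,\dots,m\}$ this is at most $\sum_{j=m-p}^{m}n^{j}\le\sum_{j=0}^{m}n^{j}=\frac{n^{m+1}-1}{n-1}$, which is the asserted bound. (One may equivalently phrase Steps 2--3 as an induction on $|\alp(w_r)|$ driven by the recurrence $f(k)=1+n\,f(k-1)$, obtained by peeling off the longest initial segment of the tower that avoids some fixed letter of $\alp(w_r)$ and applying the block bound to the remainder.)

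The delicate point is the block lemma, and two things there need care. First, to force the two colliding strings to have the same parity — so that they are accepted by the \emph{same} automaton, which the splicing requires — one needs the block to be a constant factor longer than $n^{|\Gamma|}$, so pinning the constant down to land on exactly $\frac{n^{m+1}-1}{n-1}$ rather than a multiple of it calls for a sharper count (a parity-aware signature, or tracking one fewer milestone and rechecking that the geometric sum still closes). Second, and more seriously, the pumping/splicing must be executed coherently in \emph{both} automata at once: the new words have to be simultaneously longer than the previous ones, accepted by the appropriate automaton, of alphabet exactly $\Gamma$, and correctly alternating — which means controlling the ``prefix'' and ``suffix'' parts of the runs that lie outside the reinserted cycles, and reconciling the (possibly different) first-occurrence orders of the letters in $v$ and $v'$. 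This coordination is where I expect the real work to lie.
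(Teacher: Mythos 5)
Your high-level strategy (split the tower into at most $m+1$ blocks of constant alphabet, bound a block over $\Gamma$ by roughly $n^{|\Gamma|}$, sum the geometric series) is a genuinely different decomposition from the paper's, and the numerology would indeed close if the block lemma held. But the proof you propose for the block lemma has a real gap, and it is not the one you flag. Your signature records, for each string $v$, the states of \emph{its own} accepting run immediately after the first occurrences of the letters of $\Gamma$. If $v\preq v'$ share this signature $(p_1,\dots,p_k)$, all you get is that both runs pass through $p_1,\dots,p_k$ in order; the segment of $v'$'s run between $p_j$ and $p_{j+1}$ is a \emph{path} from $p_j$ to $p_{j+1}$, and a path can be substituted for the corresponding segment of $v$'s run exactly once — it cannot be ``reinserted any number of times.'' Repetition requires a segment that starts and ends at the \emph{same} state, i.e.\ a cycle, and moreover (to stay inside the block and to absorb arbitrary strings over $\Gamma$) a cycle whose label uses every letter of the relevant sub-alphabet. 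Nothing in a coincidence of milestone tuples produces such a cycle, so the claimed infinite tower does not materialize and the pigeonhole on $n^k$ tuples proves nothing.

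The missing ingredient is precisely what the paper builds by hand: a \emph{cyclic factorization} of each $w_i$ along its accepting run, in which every non-letter factor $v_j$ is guaranteed to contain a cycle $p_j\rightarrow p_j$ over all of $\alp(v_j)$. Repetition of states among these $p_j$ bounds the number of factors by $O(n)$, a non-trivial factorization forces every cycle factor's alphabet to be a \emph{proper} subset of $\alp(w)$ (this is what drives the recursion $g(x+1)=n\,g(x)+n$ and yields $\frac{n^{m+1}-1}{n-1}$), and the degenerate case where the factorization structure is preserved from $w_i$ to $w_{i-1}$ is exactly the case where the full-alphabet cycles exist in \emph{both} automata and can be pumped to build the infinite tower. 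Your secondary worries (parity, and coordinating the two automata) are real but are symptoms of the same omission; note also that even granting a working splice you would need the colliding pair to share the run's initial state, final state, and the first-occurrence order of letters, inflating the count well past $n^k$. To repair your block lemma you would essentially have to replace ``state after the first occurrence of $a_j$'' by ``state at which a cycle over a given sub-alphabet is completed,'' at which point you have rediscovered the paper's factorization.
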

  \begin{proof}
    First, we define some new concepts. We say that $w=v_1v_2\cdots v_k$ is a \emph{cyclic factorization} of $w$ with respect to a pair of states $(q,q')$ in an automaton~$A$, if there is a sequence of states 
    $q_0,\dots,q_{k-1},q_{k}$ such that
    $q_0=q$, $q_{k}=q'$, and
    $ q_{i-1} \stackrel {v_i} \longrightarrow q_{i}$,
    for each $i=1,2,\dots k$, and either $v_i$ is a letter,
    or the path $q_{i-1} \stackrel {v_i} \longrightarrow q_{i}$ contains a cycle over $\alp(v_i)$.
    We call $v_i$ a \emph{letter factor} if it is a letter and $q_{i-1}\neq q_i$,
    and a \emph{cycle factor} otherwise. The factorization is \emph{trivial} if $k=1$. 
    Note that this factorization is closely related to the one given in~\cite{Almeida-jpaa90}, see also \cite[Theorem~8.1.11]{AlmeidaBook}.

    We first show that if $q'\in \delta(q, w)$ in some automaton $A$ with $n$  states, then $w$ has a cyclic factorization $v_1v_2\cdots v_k$ with respect to $(q,q')$ that contains at most $n$ cycle factors and at most $n-1$ letter factors. Moreover, if $w$ does not admit the trivial factorization with respect to $(q,q')$, then $\alp(v_i)$ is a strict subset of $\alp(w)$ for each cycle factor $v_i$, $i=1,2,\dots,k$.

 Consider a  path $\pi$ of the automaton $A$
 from $q$ to $q'$ labeled by a string~$w$.
 Let $q_0=q$. Define the factorization  $w=v_1 v_2 \cdots v_k$ inductively
 by the following greedy strategy. 
 Assume we have defined the factors $v_1,  v_2\ldots, v_{i-1}$
 such that $w = v_1 \cdots v_{i-1} w'$
 and $q_0 \xrightarrow{ v_1 v_2\cdots v_{i-1}} q_{i-1}$.
 The factor $v_i$ is defined as the label
 of the longest possible initial segment $\pi_i$ 
 of the path $q_{i-1}\xrightarrow{w'} q'$ 
 such that either $\pi_i$ contains a cycle over $\alp(v_i)$
 or
 $\pi_i=q_{i-1},a,q_{i}$, where $v_i=a$, so $v_i$ is a letter.
 Such a  factorization is well defined,
 and it is a cyclic factorization of $w$.

 Let $p_i$, $i=1,\dots,k$, be a state
 such that the path $q_{i-1} \stackrel {v_i} \longrightarrow q_{i}$
 contains a cycle $p_i\rightarrow p_i$ over $\alp(v_i)$ if $v_i$ is a cycle factor,
 and $p_i=q_{i-1}$ if $v_i$ is a letter factor.
    If $p_i=p_j$ with $i<j$ such that $v_i$ and $v_j$ are cycle factors,
 then we have a contradiction with the maximality of $v_i$ since
 $q_{i-1} \xrightarrow{v_i v_{i+1}\cdots v_j}  q_{j}$
 contains  a cycle $p_i\rightarrow p_i$ from $p_i$ to $p_i$
 over the alphabet $\alp(v_i v_{i+1}\cdots v_j)$.
 Therefore the factorization contains at most $n$ cycle factors.

 Note that $v_i$ is a letter factor only if the state $p_i$,
 which is equal to $q_{i-1}$ in such a case, 
 has no reappearance in the path $q_{i-1}\xrightarrow{v_i \cdots v_k} q'$.
 This implies that there are at most $n-1$ letter factors. 
 Finally, if $\alp(v_i)=\alp(w)$, then $v_i=v_1=w$
 follows from the maximality of $v_1$.

    We now define inductively cyclic factorizations of $w_i$, such that the factorization of $w_{i-1}$ is a refinement of the factorization of $w_{i}$. Let $w_r=v_{r,1}v_{r,2}\cdots v_{r,k_r}$ be a cyclic factorization of $w_r$ defined, as described above, by some accepting path in the automaton $A_{r\bmod 2}$.
    Factorizations $w_{i-1}=v_{i-1,1}v_{i-1,2}\cdots v_{i-1,k_{i-1}}$ are defined as follows. Let 
    \[
      w_{i-1}=v'_{i,1}v'_{i,2}\cdots v'_{i,k_{i}}\,,
    \]
    where $v'_{i,j}\preq v_{i,j}$, for each $j=1,2,\dots,k_{i}$; note that such a factorization exists since $w_{i-1}\preq w_{i}$. Then $v_{i-1,1}v_{i-1,2}\cdots v_{i-1,k_{i-1}}$ is defined 
as a concatenation of  cyclic factorizations of $v'_{i,j}$, $j=1,2,\dots,k_{i}$, corresponding to an accepting path of $w_{i-1}$ in  $A_{i-1\bmod 2}$. The cyclic factorization of the empty string is defined as empty. Note also that a letter factor 
of $w_{i}$ either disappears in $w_{i-1}$,
 or it is  ``factored'' into a letter factor.

In order to measure the height of a tower, we introduce
a weight function $f$ of factors in a factorization $v_1v_2\cdots v_k$.
First, let
\[
g(x)= n\frac{{n^x}-1}{n-1}\,.
\]
Note that $g$ satisfies $g(x+1)=ng(x)+(n-1)+1$.
Now, let 
$f(v_i)=1$ if $v_i$ is a letter factor, and let $f(v_i)=g(|\alp(v_i)|)$ if $v_{i}$ is a cycle factor.
Note that, by definition, $f(\eps)=0$.
 The weight of the factorization $v_1v_2\cdots v_k$ is then defined by
\[W(v_1v_2\cdots v_k)=\sum_{i=1}^k f(v_i)\,.\]
    Let 
    \[
      W_i=W(v_{i,1}v_{i,2}\cdots v_{i,k_i}).
    \]
    We claim that $W_{i-1}<W_{i}$ for each $i=2,\dots,r$. Let $v_1v_2\cdots v_k$ be the fragment of the cyclic factorization of $w_{i-1}$ that emerged as the cyclic factorization of $v_{i,j}'\preq v_{i,j}$. If the factorization is not trivial, then, by the above analysis, 
        \begin{align*}
      W(v_1v_2\cdots v_k)&\leq n-1 + n\cdot g(|\alp(v_{i,j})|-1)
       <g(|\alp(v_{i,j})|)=f(v_{i,j}). 
    \end{align*}
    Similarly, we have $f(v_{i,j}')<f(v_{i,j})$ if $|\alp(v_{i,j}')|<|\alp(v_{i,j})|$. Altogether, we have $W_{i-1}<W_{i}$ as claimed, unless 
				\begin{itemize}
			\item $k_{i-1}=k_{i}$,
			\item the factor $v_{i-1,j}$ is a letter factor if and only 
if $v_{i,j}$ is a letter factor, and
			\item $\alp(v_{i-1,j})= \alp(v_{i,j})$ for all $j=1,2,\dots,k_i$.
		\end{itemize}
		Assume that such a situation takes place, and show that it leads to an infinite tower.
 Let $L$ be the language of strings $z_1z_2\cdots z_{k_i}$ such that $z_j=v_{i,j}$ if $v_{i,j}$ is a letter factor, and $z_j\in (\alp(v_{i,j}))^*$ if $v_{i,j}$ is a cycle factor. Since $w_i\in L(A_{i\bmod 2})$ and $w_{i-1}\in L(A_{i-1\bmod 2})$ holds, the definition of a cycle factor implies that, for each $z\in L$, there is some $z'\in L(A_0)\cap L$ such that $z\preq z'$, and also  $z''\in L(A_1)\cap L$ such that $z\preq z''$. The existence of an infinite tower follows. We have therefore proved $W_{i-1}<W_{i}$. 

    The proof is completed, since $W_r\leq f(w_r)\leq g(m)$, $W_1\geq 0$, and the bound in the claim is equal to $g(m)+1$.
  \qed\end{proof}

  For binary regular languages, we now show that there exists a tower of length at least $n^2 - O(n)$ between two binary regular languages having no infinite tower and represented by automata with at most $n$ states.

  \begin{theorem}\label{thm02}
  \label{thm:quadratic}
    The upper bound $\frac{n^3-1}{n-1}$
    on the length of a maximal tower is tight for binary languages up to a linear factor.
  \end{theorem}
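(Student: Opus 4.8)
The plan is to exhibit two binary languages $L_0=L(A_0)$ and $L_1=L(A_1)$, each recognised by an NFA with at most $n$ states, such that there is no infinite tower between them but there is a finite tower of length $\Omega(n^2)$. Since for $m=2$ the bound of Theorem~\ref{thm03} is $\frac{n^3-1}{n-1}=n^2+n+1=\Theta(n^2)$, this proves tightness up to a constant (hence a linear) factor.

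Call a maximal factor of equal letters of a word a \emph{block}. Put $c=\lfloor n/2\rfloor$ and $P=a^cb^c$. Let $A_1$ be the $2c$-state cyclic automaton whose states are the positions $0,1,\dots,2c-1$ inside one copy of $P$, with the transitions that read $P$ around the cycle, and with a position accepting iff it is odd; then $L_1$ is exactly the set of prefixes of $P^{\omega}$ of odd length. Let $A_0$ recognise $L_0=\{w\in\{a,b\}^*:\ |w|\text{ is even and }w\text{ has at most }p\text{ blocks}\}$, where $p=\Theta(n)$ is chosen so that $A_0$ — which tracks the current block index, its letter, and the parity of the prefix read so far — has at most $n$ states. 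Observe that $L_0\cap L_1=\emptyset$ (parity of length), that $A_1$ has only cycles over $\{a,b\}$ (its sole cycle is the $P$-cycle and its powers), and that $A_0$ has no cycle over $\{a,b\}$ (the block index never decreases along a path, so every cycle stays within one block and reads a single letter). Finally, for $N\ge 1$ let $w_N$ be the prefix of $P^{\omega}$ of length $N$, and let $M$ be the largest even integer with $M\le cp=\Theta(n^2)$. Then $w_N$ has $\lceil N/c\rceil\le p$ blocks, so $w_N\in L_1$ for $N$ odd and $w_N\in L_0$ for $N$ even, and $w_N\preccurlyeq w_{N+1}$ since $w_N$ is a prefix of $w_{N+1}$. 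Hence $(w_N)_{N=1}^{M}$ is a tower of length $M=\Omega(n^2)$ between $L(A_0)$ and $L(A_1)$ with $w_N\in L(A_{N\bmod 2})$.

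The crux is that no infinite tower exists, which follows from the cycle structure. First, if some $x\in L_1$ had a block of length at least the number of states of $A_1$, an accepting path for $x$ would contain a cycle reading only one letter, which $A_1$ does not have; so every block of every word of $L_1$ has length less than $n$. Second, for $y\in L_0$: decompose an accepting path of $y$ along the DAG of strongly connected components of $A_0$; there are at most $n$ components, each visited at most once (the condensation is acyclic), each contributing a single‑letter block, with one transition between consecutive components, so $y$ is a concatenation of at most $2n$ blocks. Now suppose $(x_i)_{i\ge 1}$ were an infinite tower. By disjointness it strictly alternates between $L_0$ and $L_1$, and the $x_i$ are pairwise distinct with strictly increasing lengths. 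For every index $i$ with $x_i\in L_1$ we have $x_i\preccurlyeq x_{i+1}\in L_0$; since a subsequence has at most as many blocks as the original word, $x_i$ has at most $2n$ blocks, each of length less than $n$, whence $|x_i|<2n^2$. This contradicts the fact that infinitely many pairwise distinct words $x_i$ have strictly increasing lengths. Therefore there is no infinite tower, and the proof is complete up to the routine verification of the state counts.

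The main obstacle is the three‑way tension built into the construction: a tower of length $\Theta(n^2)$ forces both automata to have cycles (their accepting paths must be long), yet two cycles that are ``too compatible'' — in particular two cycles over the full alphabet, or two cycles over a common single letter — can always be pumped against one another to produce an infinite tower; so at most one automaton may carry a full‑alphabet cycle and the other must be confined to single‑letter cycles. This is exactly what keeps the words of the poorer language $L_1$ short enough (blocks of length $<n$, and at most $2n$ blocks once embedded into $L_0$) to block an infinite tower, while still leaving room — roughly $c$ block‑lengths times $p$ blocks — for a quadratic finite tower. Balancing $c$ and $p$ against the state budget is then mere bookkeeping.
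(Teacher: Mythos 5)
Your construction is sound and your no-infinite-tower argument is correct: the key facts you use (a subsequence has at most as many blocks as the word it embeds into; every word of $L_1$ has blocks of length less than $n$; every word of $L_0$ has $O(n)$ blocks) are all true, and together they bound the length of any $L_1$-member of a tower by $O(n^2)$, which kills any infinite tower, while the prefixes of $(a^cb^c)^\omega$ give a finite tower of length $\Theta(cp)$. In spirit this is the same strategy as the paper's: both proofs build a prefix tower whose elements alternate by a trivial syntactic criterion (you use parity of length, the paper uses the last letter), with one automaton bounding block lengths and the other bounding the number of blocks, and both rule out an infinite tower by observing that sufficiently long words of one language are too block-rich to embed into the other.

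The one substantive caveat is quantitative. The paper's claim, read together with the sentence preceding the theorem, is that a tower of length $n^2-O(n)$ exists, i.e.\ the gap to the upper bound $n^2+n+1$ is an \emph{additive} linear term. Your construction splits the state budget three ways -- $2c\le n$ states for the period in $A_1$ and roughly $4p\le n$ states for (block index, letter, parity) in $A_0$ -- so you only get $cp\approx n^2/8$. The paper avoids this by making the two counters live in different automata at essentially full size: its $A_1$ spends all $n$ states counting $b$'s within a block (so blocks can have length $n-1$), while its $A_0$ spends all $n-1$ states counting occurrences of $a$ (so there can be $n-2$ blocks), with $b$'s absorbed by self-loops; disjointness is enforced by the final $b$-cycle rather than by a global length parity, which costs only $O(1)$ states. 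If ``up to a linear factor'' is read multiplicatively, your $\Omega(n^2)$ bound suffices (indeed it is tight up to a constant factor); if it is read as the additive $n^2-O(n)$ statement the paper actually proves, you would need to rebalance your construction along these lines to recover the leading constant $1$.
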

  \begin{proof}
    Let $n$ be an odd number and define the automata $A_0$ and $A_1$ with $n-1$ and $n$ states as depicted in Figs.~\ref{ex01_1} and~\ref{ex01_2}, respectively.
    
    The automaton $A_0=(\{1,2,\ldots,n-1\},\{a,b\},\delta_0,1,\{n-1\})$ consists 
    of an $a$-path from state 1 through states $2,3,\ldots,n-3$, respectively, to state $n-2$, 
    of $a$-transitions from state 1 to all states but itself and the final state,
    of self-loops under $b$ in all but the states $n-2$ and $n-1$,
    and of a $b$-cycle from $n-2$ to $n-1$ and back to $n-2$.
    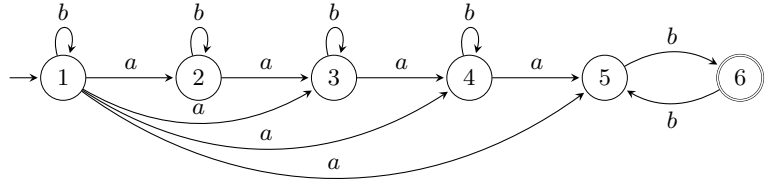
\begin{figure}[t]
      \centering
      \begin{tikzpicture}[->,>=stealth,shorten >=1pt,auto,node distance=1.8cm,
        state/.style={ellipse,minimum size=6mm,very thin,draw=black,initial text=}]
        \node[state,initial]    (1) {$1$};
        \node[state]            (2) [right of=1] {$2$};
        \node[state]            (3) [right of=2] {$3$};
        \node[state]            (4) [right of=3] {$4$};
        \node[state]            (5) [right of=4] {$5$};
        \node[state,accepting]  (6) [right of=5] {$6$};
        \path
          (1) edge node {$a$} (2)
          (1) edge[bend right=30] node {$a$} (3)
          (1) edge[bend right=33] node {$a$} (4)
          (1) edge[bend right=36] node {$a$} (5)
          (2) edge node {$a$} (3)
          (3) edge node {$a$} (4)
          (4) edge node {$a$} (5)
          (1) edge[loop above] node {$b$} (1)
          (2) edge[loop above] node {$b$} (2)
          (3) edge[loop above] node {$b$} (3)
          (4) edge[loop above] node {$b$} (4)
          (5) edge[bend left] node {$b$} (6)
          (6) edge[bend left] node {$b$} (5);
      \end{tikzpicture}
      \caption{Automaton $A_0$; $n-1=6$.}
      \label{ex01_1}
    \end{figure}
    
    The automaton $A_1=(\{1,2,\ldots,n\},\{a,b\},\delta_1,1,\{1,n\})$ consists
    of a $b$-path from state 1 through states $2,3,\ldots,n-1$, respectively, to state $n$, 
    of an $a$-transition from state $n$ to state 1,
    and of $b$-transitions going from state 1 to all even-labeled states.
    
    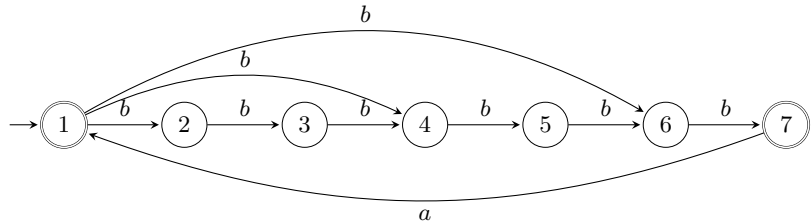
\begin{figure}[b]
      \centering
      \begin{tikzpicture}[->,>=stealth,shorten >=1pt,auto,node distance=1.6cm,
        state/.style={ellipse,minimum size=6mm,very thin,draw=black,initial text=}]
        \node[state,initial,accepting]  (1) {$1$};
        \node[state]                    (2) [right of=1] {$2$};
        \node[state]                    (3) [right of=2] {$3$};
        \node[state]                    (4) [right of=3] {$4$};
        \node[state]                    (5) [right of=4] {$5$};
        \node[state]                    (6) [right of=5] {$6$};
        \node[state,accepting]          (7) [right of=6] {$7$};
        \path
          (1) edge node {$b$} (2)
          (1) edge[bend left=25] node {$b$} (4)
          (1) edge[bend left=30] node {$b$} (6)
          (2) edge node {$b$} (3)
          (3) edge node {$b$} (4)
          (4) edge node {$b$} (5)
          (5) edge node {$b$} (6)
          (6) edge node {$b$} (7)
          (7) edge[bend left=20] node {$a$} (1);
      \end{tikzpicture}
      \caption{Automaton $A_1$; $n=7$.}
      \label{ex01_2}
    \end{figure}
    Consider the string
    \[
      (b^{n-1} a)^{n-3} (b^{n-1} b)\,.
    \]
    This string consists of $n-2$ parts of length $n$ and belongs to $L(A_0)$. Note that deleting the last letter $b$ results in a string that belongs to $L(A_1)$. Deleting another letter $b$ from the right results in a string belonging again to the language $L(A_0)$. We can continue in this way alternating between the languages until the letter $a$ is the last letter, 
    that is, until the string $(b^{n-1} a)^{n-3}$,
    which belongs to $L(A_1)$. Now, we delete the last two letters, namely the string $ba$, which results in a string from $L(A_0)$, and we can continue with deleting the last letters $b$ again as described above. Moreover, we cannot accept the prefix $b^{n-2}$ in $A_0$, hence the length of the tower is at least $n(n-2) - (n-3) - (n-2) = n^2 - 4n + 5$. 
    
    To show that there is no infinite tower between the languages $L(A_0)$ and $L(A_1)$, we can use the techniques described in~\cite{icalp2013,mfcsPlaceRZ13}, or to use the algorithm presented in Section~\ref{secAlg}. 
    We can also notice that letter $a$ can appear at most $n-3$ times in any string from $L(A_0)$ and that after at most $n-1$ occurrences of letter $b$, letter $a$ must appear in a string from $L(A_1)$. As the languages are disjoint, any infinite tower would have to contain a string from $L(A_1)$ of length more than $n \cdot (n-3) + (n-1)$. But any such string in $L(A_1)$ must contain at least $n-2$ occurrences of letter $a$, hence it cannot be embedded into any string of $L(A_0)$. This means that there cannot be an infinite tower.
  \qed\end{proof}

  In Theorem~\ref{thm:quadratic}, we have shown that there exists a tower of a quadratic length between two binary languages having no infinite tower. Now we show that there exist two quaternary languages having a tower of length more than quadratic.

\begin{theorem}\label{thm04}
\label{thm:cubic}
  There exist two languages with no infinite tower having a finite tower of a cubic length.
\end{theorem}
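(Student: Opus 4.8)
The plan is to stack the binary construction of Theorem~\ref{thm:quadratic} underneath one more level. Recall that the binary tower realized in Theorem~\ref{thm:quadratic} behaves like a two-digit base-$n$ odometer: one deletes letters from the right of the seed $w=(b^{n-1}a)^{n-3}b^{n}$, the exponent of the trailing block of $b$'s acting as the low digit (of range about $n$) and the number of $b^{n-1}a$-blocks acting as the high digit (of range about $n-3$), so that about $n$ macro-steps of about $n$ single-letter deletions each are performed, the language alternating at every deletion. I would reuse these two automata as an \emph{inner core} on the letters $a,b$, introduce a third letter $c$ that plays the role of the carry of a third digit, and use a fourth letter $d$ purely for bookkeeping --- fixing the initial parity and keeping the two new languages disjoint. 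The new seed consists of $\kappa=\Theta(n)$ copies of the inner seed $w$, separated by occurrences of $c$ and decorated by $d$ at its right end, so its length is $\Theta(n^3)$. The intended tower first runs the inner tower on the rightmost copy of $w$ --- that is $\Theta(n^2)$ single-letter deletions by Theorem~\ref{thm:quadratic}, after which that block has shrunk to its residual $b^{n-1}$ --- then spends $O(n)$ further deletions absorbing this residual together with the $c$ preceding it, which is precisely the carry produced when the third digit is decremented, and then repeats on the next copy of $w$. With $\Theta(n)$ copies this yields a tower of length $\Theta(n^2)\cdot\Theta(n)=\Theta(n^3)$, cubic in the number of states.

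Concretely I would let $A_0$ (respectively $A_1$) accept the strings of the form $x_1\,c\,x_2\,c\cdots c\,x_t$, possibly with a $d$ appended, where $x_1,\dots,x_{t-1}$ are complete blocks accepted by the inner automaton, $x_t$ is an intermediate string of the inner tower or of the absorbing phase, $t$ is bounded by $\kappa$, and an explicit parity condition on $t$ together with the position reached inside $x_t$ selects one of the two languages, the selection being chosen so that deleting one letter along the intended sequence flips it. A completed block returns the outer automaton to a ``block boundary'' state; since we bound $t$ by $\kappa=\Theta(n)$ we keep one such state per value of the block counter, and one extra bit records whether we are currently in the inner phase or the absorbing phase. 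Because the inner automata of Figs.~\ref{ex01_1}--\ref{ex01_2} have $O(n)$ states, $A_0$ and $A_1$ again have $O(n)$ states, and the seed lies in one of the two languages by construction.

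The genuinely delicate part, exactly as in Theorem~\ref{thm:quadratic}, is twofold. First, one must check that $L(A_0)$ and $L(A_1)$ are disjoint and that \emph{every} single-letter deletion along the intended sequence --- not only the macro-steps --- really alternates between them; this is the same bookkeeping as in the binary case, now performed with one additional digit, and is where the end marker $d$ earns its keep. Second, one must show that no infinite tower exists; here I would imitate the argument in the proof of Theorem~\ref{thm:quadratic}. The block counter caps the number of occurrences of $c$, and inside each block the inner structure caps the number of occurrences of $a$, so every accepted string contains only boundedly many ``high'' letters, while, as in the binary case, a sufficiently long string accepted by one of the two languages is forced to contain strictly more high-letter occurrences than any string of the companion language can supply, so the embedding required by an infinite tower cannot exist; by Higman's lemma~\cite{higman} an infinite tower would indeed have to contain arbitrarily long strings. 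The main obstacle is to arrange the definitions so that stacking the third digit on top of the binary core introduces neither a reachable cycle over the whole alphabet (which would create an infinite tower) nor a string lying in both languages; keeping $c$ and $d$ strictly above the binary core is what makes this work. Since the upper bound of Theorem~\ref{thm03} is of order $n^{m}$ for an $m$-letter alphabet, a cubic tower over four letters is comfortably within the permitted range.
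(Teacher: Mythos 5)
Your high-level plan is the same as the paper's: a four-letter ``odometer'' with three nested digits, realized as a tower of prefixes obtained by deleting last letters from a seed of length $\Theta(n^3)$, with a no-infinite-tower argument based on capping the occurrences of the high letters. The gap is in the state count, and it is not a cosmetic one. You propose that the automata accept strings $x_1\,c\,x_2\,c\cdots c\,x_t$ in which $x_1,\dots,x_{t-1}$ are \emph{complete blocks accepted by the inner automaton}, with $t$ bounded by $\kappa=\Theta(n)$ via ``one block-boundary state per value of the block counter,'' while the inner core of Theorem~\ref{thm:quadratic} verifies the structure of each $x_i$. A single NFA state cannot simultaneously record the value of the block counter and the position inside the shared inner core unless you take the product of the two, which costs $\Theta(n)\cdot\Theta(n)=\Theta(n^2)$ states; with $N=\Theta(n^2)$ states your tower of length $\Theta(n^3)$ is only $\Theta(N^{3/2})$, and the theorem (cubic in the number of states) is not established. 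Your assertion that ``because the inner automata have $O(n)$ states, $A_0$ and $A_1$ again have $O(n)$ states'' is exactly the step that does not follow.

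The paper escapes this by never letting either automaton track two digits of the odometer at once: in Fig.~\ref{ex02_1} the automaton $A_0$ counts \emph{only} the highest digit (the letter $a$, at most $n-3$ occurrences) and self-loops on $b,c,d$ everywhere else, while in Fig.~\ref{ex02_2} the automaton $A_1$ counts the middle digit (the letter $d$) with self-loops on $b,c$, and its counter is \emph{reset} to the initial state whenever the high letter $a$ is read; the lowest digit is the position along a short $bc$-path that is in turn reset by each $d$. Each automaton therefore stores only one digit at a time, the lower digits being forgotten exactly when a higher digit ticks, and this is what keeps both automata at $O(n)$ states while the product of the three digit ranges still yields a tower of length $\Omega(n^3)$. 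Disjointness and the letter-by-letter alternation, which you correctly identify as delicate but leave open, come for free in that construction from the last letter of the string ($A_0$ accepts only strings ending in $b$, $A_1$ only strings ending in $a$, $c$, or $d$). To repair your argument you would need to redesign your outer/inner decomposition along these lines rather than reusing the Theorem~\ref{thm:quadratic} automata as a verified subroutine.
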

\begin{proof}
  Let $n$ be a number divisible by four and define the automata $A_0$ and $A_1$ with $n-1$ and $n$ states as shown in Figs.~\ref{ex02_1} and~\ref{ex02_2}, respectively.
  
  The automaton $A_0=(\{1,2,\ldots,n-1\},\{a,b,c,d\},\delta_0,1,\{n-1\})$ consists of an $a$-path through states $1,2,\ldots,n-2$, respectively, of $a$-transitions from state 1 to all other states but itself and the final state, of self-loops under symbols $b,c,d$ in all but the final state, and of a $b$-transition from all, but the final state, to the final state.
  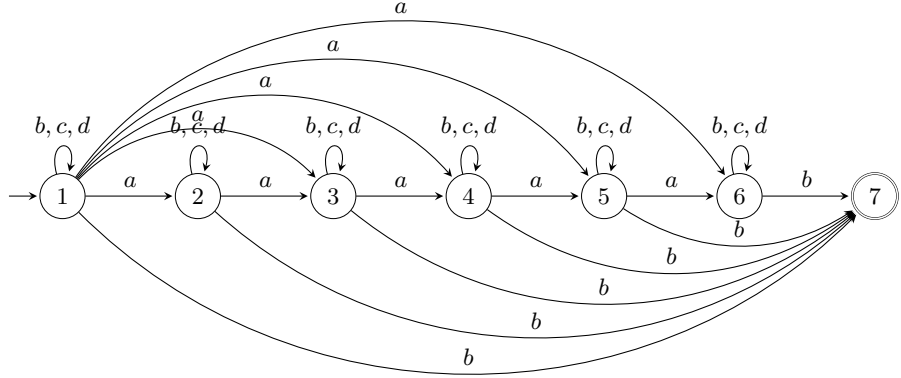
\begin{figure}[t]
    \centering
    \begin{tikzpicture}[->,>=stealth,shorten >=1pt,auto,node distance=1.8cm,
      state/.style={ellipse,minimum size=6mm,very thin,draw=black,initial text=}]
      \node[state,initial]    (1) {$1$};
      \node[state]            (2) [right of=1] {$2$};
      \node[state]            (3) [right of=2] {$3$};
      \node[state]            (4) [right of=3] {$4$};
      \node[state]            (5) [right of=4] {$5$};
      \node[state]            (6) [right of=5] {$6$};
      \node[state,accepting]  (7) [right of=6] {$7$};
      \path
        (1) edge node {$a$} (2)
        (1) edge[bend left=46] node {$a$} (3)
        (1) edge[bend left=49] node {$a$} (4)
        (1) edge[bend left=52] node {$a$} (5)
        (1) edge[bend left=55] node {$a$} (6)
        (1) edge[loop above] node {$b,c,d$} (1)
        (1) edge[bend right=45] node {$b$} (7)
        (2) edge node {$a$} (3)
        (2) edge[loop above] node {$b,c,d$} (2)
        (2) edge[bend right=42] node {$b$} (7)
        (3) edge node {$a$} (4)
        (3) edge[loop above] node {$b,c,d$} (3)
        (3) edge[bend right=39] node {$b$} (7)
        (4) edge node {$a$} (5)
        (4) edge[loop above] node {$b,c,d$} (4)
        (4) edge[bend right=36] node {$b$} (7)
        (5) edge node {$a$} (6)
        (5) edge[loop above] node {$b,c,d$} (5)
        (5) edge[bend right=33] node {$b$} (7)
        (6) edge node {$b$} (7)
        (6) edge[loop above] node {$b,c,d$} (6);
    \end{tikzpicture}
    \caption{Automaton $A_0$; $n-1=7$.}
    \label{ex02_1}
  \end{figure}
  
  The automaton $A_1=(\{1,2,\ldots,n\},\{a,b,c,d\},\delta_1,1,\{\frac{n}{2},n\})$ consists of two parts. 
  The first part is constituted by states $1,2,\ldots,\frac{n}{2}$ with a $d$-path through states $1,2,\ldots,\frac{n}{2}$, respectively, by self-loops under $b,c$ in states $1,2,\ldots,\frac{n}{2}-1$, and by $d$-transitions from state 1 to all of states $2,3,\ldots,\frac{n}{2}$. The second part is constituted by states $\frac{n}{2},\ldots,n$ with a $bc$-path through states $\frac{n}{2},\ldots,n-2$, respectively, by $a$-transitions from state $n-1$ to states $1$ and $n$, by a $c$-transition from state $n-1$ to state $n$, and by $b$-transitions from state $\frac{n}{2}$ to all odd-numbered states between $\frac{n}{2}$ and $n-1$.
  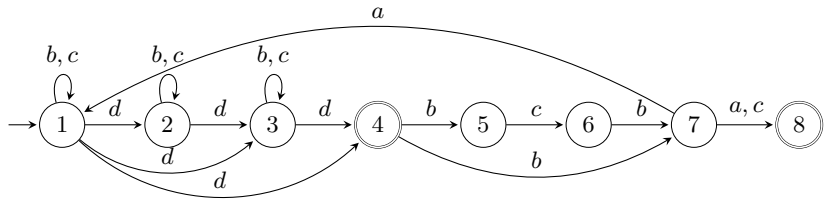
\begin{figure}[b]
    \centering
    \begin{tikzpicture}[->,>=stealth,shorten >=1pt,auto,node distance=1.4cm,
      state/.style={ellipse,minimum size=6mm,very thin,draw=black,initial text=}]
      \node[state,initial]            (1) {$1$};
      \node[state]                    (2) [right of=1] {$2$};
      \node[state]                    (3) [right of=2] {$3$};
      \node[state,accepting]          (4) [right of=3] {$4$};
      \node[state]                    (5) [right of=4] {$5$};
      \node[state]                    (6) [right of=5] {$6$};
      \node[state]                    (7) [right of=6] {$7$};
      \node[state,accepting]          (8) [right of=7] {$8$};
      \path
        (1) edge[loop above] node {$b,c$} (1)
        (1) edge node {$d$} (2)
        (1) edge[bend right=40] node {$d$} (3)
        (1) edge[bend right=43] node {$d$} (4)
        (2) edge[loop above] node {$b,c$} (2)
        (2) edge node {$d$} (3)
        (3) edge[loop above] node {$b,c$} (3)
        (3) edge node {$d$} (4)
        (4) edge node {$b$} (5)
        (4) edge[bend right] node {$b$} (7)
        (5) edge node {$c$} (6)
        (6) edge node {$b$} (7)
        (7) edge node {$a,c$} (8)
        (7) edge[bend right=30] node[above] {$a$} (1);
    \end{tikzpicture}
    \caption{Automaton $A_1$; $n=8$ and $F=\{\frac{n}{2},n\}$.}
    \label{ex02_2}
  \end{figure}

  Note that the languages are disjoint since $A_0$ accepts strings ending with $b$,
  while $A_1$ accepts strings ending with $a,c$, or $d$.
  
  Consider the string
  \[
    \left[ \left(bd (bc)^{\frac{n}{4}}\right)^{\frac{n}{2}-2} bd(bc)^{\frac{n}{4}-1}ba \right]^{n-3} 
    \cdot 
    \left(bd (bc)^{\frac{n}{4}}\right)^{\frac{n}{2}-2} bd(bc)^{\frac{n}{4}-1} b c b \,.
  \]
  This string belongs to $L(A_0)$ and consists of $n-3$ parts each of length $\frac{n^2}{4}+\frac{n}{2}-2$, plus one part of length $\frac{n^2}{4}+\frac{n}{2}-1$. We can delete the last letters one by one, obtaining strings alternating between $L(A_1)$ and $L(A_0)$. Hence the length of this tower is $(n-2)\cdot (\frac{n^2}{4}+\frac{n}{2}-2) + 1$, which results in a tower of length $\Omega(n^3)$.

  To show that there is no infinite tower between the languages, we can use the techniques described in~\cite{icalp2013,mfcsPlaceRZ13}, or the algorithm presented in Section~\ref{secAlg}.
\qed\end{proof}

  As the last result of this paper, we prove an exponential lower bound with respect to the cardinality of the input alphabet.

  \begin{theorem}\label{thm05}
  \label{thm:exp}
    There exist two languages with no infinite tower having a finite tower of an exponential length with respect to the size of the alphabet.
  \end{theorem}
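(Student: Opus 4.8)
The plan is to push the nested‑loop constructions of Theorems~\ref{thm:quadratic} and~\ref{thm:cubic} from two and three ``levels'' to an arbitrary depth $d$, over an alphabet whose size is $\Theta(d)$. Each level contributes a fresh block of letters together with one ``counter'' that runs through a fixed number $r\ge 2$ of values, and the two automata $A_0,A_1$ are built so that a single master string $W$ arises from a $d$‑fold nesting of these counters. Then $|W|=r^{\Theta(d)}$ while both automata have a number of states linear in $m$; since $m=|\Sigma|=\Theta(d)$, this gives $|W|=2^{\Omega(m)}$, matching the exponential‑in‑$m$ behaviour of the upper bound of Theorem~\ref{thm03} up to the base of the exponent.

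Concretely I would define $A_0$ and $A_1$ by recursion on the level. The innermost level carries a short counter gadget (the role played by the $b$‑cycle, resp.\ the $(bc)$‑path, in the two theorems); at level $k+1$ one wraps the level‑$k$ gadget inside a new loop driven by a fresh letter and adds the ``guessing'' transitions out of the initial state that let the automaton re‑synchronise whenever an outer counter is decremented (these are the $a$‑transitions from state $1$ to the interior states, and the $b$‑transitions from state $1$ to the even states, in Theorems~\ref{thm:quadratic} and~\ref{thm:cubic}). The accepting sets are chosen, exactly as in those proofs, so that erasing the last letter of the current string always switches it between $L(A_0)$ and $L(A_1)$, and so that the ``borrow'' steps that pass from one level to the next erase only a bounded‑length suffix. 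The tower is then the sequence obtained from $W$ by deleting letters one at a time in the mixed‑radix ``count‑down'' order: decrement the innermost counter while it is nonzero, and otherwise perform the appropriate bounded borrow, which decrements the next counter while the lower counters re‑use letters already pre‑placed in $W$. Every single deletion is a legal tower step, it alternates the two languages by construction, and the borrows contribute only a lower‑order number of extra steps, so the tower has length $\Theta(|W|)=2^{\Omega(m)}$.

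It then remains to verify the two global properties. Disjointness of $L(A_0)$ and $L(A_1)$ should be forced by a syntactic invariant engineered into the construction (as the final letter and the parities do in the two theorems). Absence of an infinite tower is the crux: following the argument in the proof of Theorem~\ref{thm:quadratic}, the top‑level letter $a$ occurs only on acyclic transitions of $A_0$, so $A_0$ has no cycle over the full alphabet and every string of $L(A_0)$ contains only boundedly many occurrences of $a$; meanwhile the factors of an $L(A_1)$‑string between consecutive occurrences of $a$ have bounded length, so a sufficiently long $L(A_1)$‑string contains more $a$'s than any $L(A_0)$‑string and cannot embed into one, and disjointness then rules out an infinite alternating tower. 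Alternatively one may simply invoke the procedure of Section~\ref{secAlg} or the criteria of~\cite{icalp2013,mfcsPlaceRZ13}.

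The main difficulty is to make the recursion genuinely interlock, so that simultaneously (i) every single‑letter deletion along the count‑down order lands in the correct language at all $d$ levels, (ii) the two languages are disjoint, and (iii) $A_0$ has no cycle over the full alphabet. The bookkeeping for (i), which must be maintained uniformly across all levels and which dictates the precise placement of the guessing and borrow transitions, is the delicate part; conditions (ii) and (iii) are comparatively easy to enforce directly in the definition of the automata.
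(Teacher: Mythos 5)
Your proposal is a plan in the right spirit---a prefix tower over an alphabet with one fresh letter per level of nesting, giving length $2^{\Omega(m)}$---but it stops short of a proof at exactly the point where the real work lies. You write that the ``bookkeeping for (i), which must be maintained uniformly across all levels and which dictates the precise placement of the guessing and borrow transitions, is the delicate part,'' and you leave it unresolved. That bookkeeping \emph{is} the theorem: without a concrete pair of automata and a verification that every single-letter deletion along your count-down order lands in the correct language at every one of the $d$ levels simultaneously, there is no construction to check. Moreover, the route you choose makes this harder than it needs to be: by generalizing the quadratic and cubic gadgets you keep \emph{both} automata nontrivial and must interlock counters, borrows, and guessing transitions across all levels at once, and it is not at all clear that the mixed-radix borrow steps (which delete a bounded-length suffix rather than a single letter) can be made to alternate correctly while preserving disjointness.

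The paper's proof dissolves this difficulty by an asymmetric design you do not consider. One automaton, $A_m$, is fixed with \emph{two} states and accepts exactly the strings ending in $b$; all of the inductive structure is packed into the other automaton $B_m$, built by adding one new initial state per letter $a_{m+1}$ with self-loops over the old alphabet and $a_{m+1}$-transitions into the old initial states. The master string is defined by the doubling $wb \mapsto wb\,a_{m+1}\,wb$, which forces $b$ onto every odd position; hence every odd-length prefix is trivially in $L(A_{m+1})$, and every even-length prefix is in $L(B_{m+1})$ by a one-line induction (either it avoids $a_{m+1}$ and is a prefix of $wb$, or $B_{m+1}$ idles in the new state until $a_{m+1}$ and then accepts the remainder in $B_m$). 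The absence of an infinite tower is likewise proved by induction on $m$ (strip the prefix up to $a_{m+1}$ and reduce to $B_m$), rather than by your counting-occurrences argument borrowed from Theorem~\ref{thm:quadratic}. To turn your proposal into a proof you would either need to supply and verify the interlocking construction you sketch, or adopt some simplification of this kind that makes the alternation check immediate.
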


  \begin{proof}
    For every non-negative integer $m$,
 we define a pair of nondeterministic automata $A_m$ and $B_m$
 over the input alphabet $\Sigma_m=\{a_1,a_2,\ldots,a_m\}\cup\{b,c\}$ 
 with a tower of length $2^{m+2}$ 
 between $L(A_m)$ and $L(B_m)$, and such that
 there is no infinite tower between the two languages.

    The two-state automaton $A_m=(\{1,2\},\Sigma_m,\delta_m,1,\{2\})$ has self-loops under all symbols in state 1 and  a $b$-transition from state 1 to state 2.
The automaton 
 is shown in  Fig.~\ref{fig4} (left), 
 and it accepts all strings over $\Sigma_m$ ending with $b$.
    \begin{figure}[t]
      \centering
      \begin{tikzpicture}[->,>=stealth,shorten >=1pt,auto,node distance=1.8cm,
        state/.style={ellipse,minimum size=6mm,very thin,draw=black,initial text=}]
        \node[state,initial]    (1) {$1$};
        \node[state,accepting]  (2) [right of=1] {$2$};
        \path
          (1) edge[loop above] node {$\Sigma_m$} (1)
          (1) edge node {$b$} (2);
      \end{tikzpicture}
      \qquad
      \begin{tikzpicture}[->,>=stealth,shorten >=1pt,auto,node distance=1.8cm,
        state/.style={ellipse,minimum size=6mm,very thin,draw=black,initial text=}]
        \node[state,initial,accepting]    (1) {$p$};
        \node[state]                      (2) [right of=1] {$q$};
        \node[state,accepting]            (3) [right of=2] {$r$};
        \path
          (1) edge node {$b$} (2)
          (2) edge node {$c$} (3);
      \end{tikzpicture}
      \caption{The two-state NFA $A_m$, for $m\ge 0$ (left), and the automaton $B_0$ (right).}
      \label{fig4}
      \label{fig5}
    \end{figure}
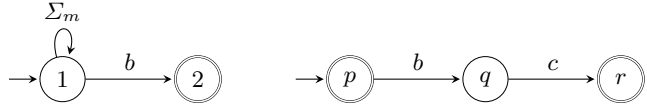
    
    The automata $B_m$ are constructed inductively as follows. The automaton $B_0=(\{p,q,r\},\{b,c\},\gamma_0,\{p\},\{p,r\})$ accepts the finite language $\{\varepsilon, bc\}$, and it is  shown in Fig.~\ref{fig5} (right).

 Assume that we have constructed the 
nondeterministic finite automaton 
$B_{m}=(Q_{m},\Sigma_{m},\gamma_{m},S_{m},\{p,r\})$. 
 We construct the nondeterministic  automaton
 $B_{m+1}=(Q_{m}\cup\{m+1\},\Sigma_{m} \cup\{a_{m+1}\},\gamma_{m+1},S_{m}\cup\{m+1\},\{p,r\})$
 by adding a new initial state 
$m+1$ to $Q_m$, 
and transitions on a fresh input symbol $a_{m+1}$. 
The transition function $\gamma_{m+1}$ extends $\gamma_{m}$ so that it defines self-loops under all letters of $\Sigma_{m}$ in the new state $m+1$,
    and adds the transitions on input
$a_{m+1}$  from state $m+1$ to all the states of $S_{m}$, that is, to all the initial states of $B_{m}$.
    The first two steps of the construction, that is, 
  automata $B_1$ and $B_2$,
are shown in Figs.~\ref{fig6} and~\ref{fig7}, respectively. Note that $L(B_{m})\subseteq L(B_{m+1})$ since all the initial states of $B_{m}$
  are initial in $B_{m+1}$ as well, and the set of final states is $\{p,r\}$
 in both automata.
    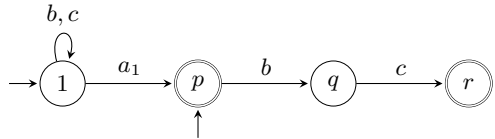
\begin{figure}[b]
      \centering
      \begin{tikzpicture}[->,>=stealth,shorten >=1pt,auto,node distance=1.8cm,
        state/.style={ellipse,minimum size=6mm,very thin,draw=black,initial text=}]
        \node[state,initial below,accepting]    (1) {$p$};
        \node[state]                      (2) [right of=1] {$q$};
        \node[state,accepting]            (3) [right of=2] {$r$};
        \node[state,initial]              (4) [left of=1] {$1$};
        \path
          (1) edge node {$b$} (2)
          (2) edge node {$c$} (3)
          (4) edge node[above] {$a_1$} (1)
          (4) edge[loop above] node {$b,c$} (4);
      \end{tikzpicture}
      \caption{Automaton $B_1$.}
      \label{fig6}
    \end{figure}
    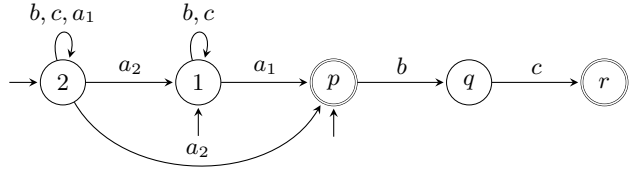
\begin{figure}[t]
      \centering
      \begin{tikzpicture}[->,>=stealth,shorten >=1pt,auto,node distance=1.8cm,
        state/.style={ellipse,minimum size=6mm,very thin,draw=black,initial text=}]
        \node[state,initial below,accepting]    (1) {$p$};
        \node[state]                      (2) [right of=1] {$q$};
        \node[state,accepting]            (3) [right of=2] {$r$};
        \node[state,initial below]              (4) [left of=1] {$1$};
        \node[state,initial]              (5) [left of=4] {$2$};
        \path
          (1) edge node {$b$} (2)
          (2) edge node {$c$} (3)
          (4) edge node[above] {$a_1$} (1)
          (4) edge[loop above] node {$b,c$} (4)
          (5) edge[loop above] node {$b,c,a_1$} (5)
          (5) edge[bend right=60] node {$a_2$} (1)
          (5) edge node {$a_2$} (4);
      \end{tikzpicture}
      \caption{Automaton $B_2$.}
      \label{fig7}
    \end{figure}
    
    By induction on $m$, 
  we show that there exists a tower
 between the languages $L(A_m)$ and $L(B_m)$
 of length $2^{m+2}$.
 More specifically, we prove that
 there exists a sequence $(w_i)_{i=1}^{2^{m+2}}$
 such that $w_i$ is a prefix of $w_{i+1}$
 and $|w_{i+1}|=|w_i|+1$ 
 for all $i=1,\ldots,2^{m+2}-1$,
 $w_1=\eps$, so $w_1\in L(B_m)$, and $w_{2^{m+2}}\in L(A_m)$.
 Thus, the tower is fully characterized by its longest string $w_{2^{m+2}}$. Moreover, by definition, the letter $b$ appears on all odd positions of $w_{2^{m+2}}$.
    
    If $m=0$, then such a tower is $\eps, b, bc, bcb$, and it is of length $2^2$. Assume that for some $m$, 
 we have a sequence of prefixes of length $2^{m+2}$ as required above,
 and such that the length of its longest string $wb$ is   $2^{m+2}-1$.
 Consider the automata $A_{m+1}$ and $B_{m+1}$ and the string 
    \[
      wba_{m+1}wb\,.
    \]
    The length of this string is $2^{(m+1)+2} - 1$, which results in $2^{(m+1)+2}$ prefixes. 
    
    By the assumption, every odd position is occupied by letter $b$,
 hence every prefix of an odd length belongs to $L(A_{m+1})$.
 It remains to show that all even-length prefixes belong to $L(B_{m+1})$.
 Let $x$ be such a prefix.
 If $x$ does not contain $a_{m+1}$, then it is a prefix of $wb$
 and belongs to $L(B_m)$ by the induction hypothesis.
 If $x=wba_{m+1}y$, then $B_{m+1}$ reads the string $wb$ in state $m+1$.   
 Then, on input $a_{m+1}$, it goes to an initial state of $B_m$.
 From this initial state,  the string $y$ is accepted as a prefix of $wb$
 by the induction hypothesis.
 Thus  $x$ is in $L(B_{m+1})$.
    
    To complete the proof,
 it remains to show that there is no infinite tower between the languages.
 We can either use the techniques described in~\cite{icalp2013,mfcsPlaceRZ13},
 or the algorithm presented in Section~\ref{secAlg}.
    However, to give a brief idea why it is so, we can give an inductive argument.  Since $L(B_0)$ is finite, there is no infinite tower between $L(A_0)$ and $L(B_0)$. Consider a tower between $L(A_{m+1})$ and $L(B_{m+1})$. 
 If~every string of the tower belonging to $L(B_{m+1})$
 is accepted from an initial state different from $m+1$, then 
    it is a tower between $L(A_{m})$ and $L(B_{m})$,
    so it is finite. Thus, if there exists an infinite tower, there also exists an infinite tower where all strings belonging to $L(B_{m+1})$ are accepted only from state $m+1$. However, every such string is of the form
 $(\{a_1,\ldots,a_{m}\}\cup\{b,c\})^* a_{m+1} y$,
 where the string $y$ is accepted from an initial state different from $m+1$.
 Cutting off the prefixes from $(\{a_1,\ldots,a_{m}\}\cup\{b,c\})^* a_{m+1}$ 
    results in an infinite tower between $L(A_{m})$ and $L(B_{m})$, which is a contradiction.
  \qed\end{proof}

\section{Conclusions}
  The definition of towers can be generalized from subsequences to basically any relation on strings, namely to prefixes, suffixes, etc. Notice that our lower-bound examples in Theorems~\ref{thm:quadratic}, \ref{thm:cubic}, and~\ref{thm:exp} are actually towers of prefixes, hence they give a lower bound on the length of towers of prefixes as well. 
  
  On the other hand, the upper-bound results cannot be directly used to prove the upper bounds for towers of prefixes. Although every tower of prefixes is also a tower of subsequences, the condition that there are no infinite towers is weaker for prefixes. The bound for subsequences therefore does not apply to languages that allow an infinite tower of subsequences but only finite towers of prefixes.
  
  Finally, note that the lower-bound results are based on nondeterminism. We are aware of a tower of subsequences (prefixes) showing the quadratic lower bound for deterministic automata. However, it is an open question whether a longer tower can be found or the upper bound is significantly different for deterministic automata.

\bibliographystyle{splncs03}
\bibliography{paper}
 
\end{document}